\newcommand{\blue}[1]{{\leavevmode\color{blue}#1}}
\newcommand{\red}[1]{{\leavevmode\color{red}#1}}
\newcommand{\LA}{\ensuremath{\mathsf{LA}}\xspace}
\newcommand{\SA}{\ensuremath{\mathsf{SA}}\xspace}
\newcommand{\ISA}{\ensuremath{\mathsf{ISA}}\xspace}
\newcommand{\BWT}{\ensuremath{\mathsf{BWT}}\xspace}
\begin{document}
\title{
In-Place BWT and Lyndon Array Construction in Constant Space
}
%
%
\author{Felipe A. Louza\inst{1,2}
\and
Arnaud Lefebvre\inst{3}
}
\authorrunning{F. A. Louza and A. Lefebvre}
%
\institute{Universidade Federal de Uberlândia, MG, Brazil \and
Universidade Estadual de Campinas, SP, Brazil\\
\email{louza@ufu.br}\\
\and
Normandie Univ., UNIROUEN, LITIS, 76000 Rouen, France\\
\email{arnaud.lefebvre@univ-rouen.fr}}
\maketitle              
\begin{abstract}

\noindent
We present an extension of the in-place BWT algorithm of Crochemore et al.~\cite{Crochemore2015} that enables the construction of the Lyndon array using $O(1)$ extra space.  
Our approach incrementally maintains the lexicographic ranks of the
suffixes during the right-to-left BWT construction and then
derives the Lyndon array through a simple next-smaller-value procedure.
Although not intended for practical use due to its quadratic running
time, the method is conceptually simple and works for unbounded alphabets.
\keywords{BWT  \and Lyndon Array \and Algorithms.}

\end{abstract}

\section{Introduction}

The Burrows--Wheeler Transform (BWT)~\cite{Burrows1994} and the Lyndon
array (LA)~\cite{Franek2016} are fundamental structures in string processing.  
The BWT underlies compressed indexing~\cite{Ferragina2000} and data compression~\cite{Burrows1994}, while the LA
captures the combinatorial structure of a string through its Lyndon
factorization and plays a role in several pattern-matching~\cite{BannaiKL25,BannaiKKP24} and
combinatorial algorithms~\cite{Bannai2014,BannaiIINTT17}

The BWT depends on the lexicographic ordering of all suffixes of the
input string.  
A standard approach computes it in $O(n)$ time after constructing the
suffix array (SA)~\cite{Manber1993}, which requires $O(n \log n)$ bits of
extra space.  
Okanohara and Sadakane~\cite{Okanohara2009} showed how to compute the BWT
directly in linear time using only $O(n \log \sigma)$ bits of extra
space, where $\sigma$ is the alphabet size, without storing the SA.
Under the most restrictive space bound of $O(1)$ extra words, the
most efficient known construction is the in-place $O(n^2)$-time algorithm of Crochemore et al.~\cite{Crochemore2015}, which overwrites the text from
right to left while maintaining only constant additional memory.  
Their method works in the general ordered alphabet model, requiring only
constant-time symbol comparisons and no assumptions on the alphabet
size.

The Lyndon array can likewise be computed from the suffix ordering in
$O(n)$ time using $O(n \log n)$ bits of extra space
\cite{Franek2016,Franek2017,LouzaMMST19,NUNES2025413}.  
Alternatively, it can be obtained via Duval's factorization
algorithm~\cite{Duval83} in $O(n^2)$ time using $O(1)$ extra space.
More recently, Bille et al.~\cite{BilleE0GKMR20} showed that the Lyndon
array can be computed in $O(n)$ time using only $O(1)$ extra words, and
in $o(n)$ extra bits for the succinct representation~\cite{Louza2018}.
Their algorithm, like that of Crochemore et al., operates in the general
ordered alphabet model and does not depend on $\sigma$.

In this paper we extend the in-place BWT algorithm of Crochemore et
al.~\cite{Crochemore2015} so that it simultaneously
computes the Lyndon array, while still using only $O(1)$ extra space.
The resulting algorithm constructs both structures in quadratic time and
constant space.  
Compared with the linear-time algorithm of Bille et
al.~\cite{BilleE0GKMR20}, our approach is slower, but it has the unique
advantage of producing the BWT \emph{in-place} while computing the Lyndon array simultaneously using the same $O(1)$ extra space.

\section{Preliminaries}\label{sec:prelim}

\begin{figure}[t]
\centering
\setlength{\tabcolsep}{8pt}
\begin{tabular}{c|l|c|c|c|c}
\hline
$i$ & Sorted suffix $T_{\SA[i]}$ & $\SA[i]$ & $\ISA[i]$ & $\BWT[i]$ & $\LA[i]$ \\
\hline
0 & {\$}        & 6 & 4 & {A} & 1 \\
1 & {A\$}       & 5 & 3 & {N} & 2 \\
2 & {ANA\$}     & 3 & 6 & {N} & 1 \\
3 & {ANANA\$}   & 1 & 2 & {B} & 2 \\
4 & {BANANA\$}  & 0 & 5 & {\$} & 1 \\
5 & {NA\$}      & 4 & 1 & {A} & 1 \\
6 & {NANA\$}    & 2 & 0 & {A} & 1 \\
\hline
\end{tabular}
\caption{SA, ISA, BWT and LA for $T=\texttt{BANANA\$}$. 
The Lyndon array $\LA[i]$ gives the length of the longest Lyndon word starting at position $i$.}
\label{fig:matrix}
\end{figure}


Let $\Sigma$ be an ordered alphabet of $\sigma$ symbols, which may be unbounded.
We assume symbols in $\Sigma$ are compared in constant-time.
Let \$ be a special symbol (end marker) not in $\Sigma$ that is lexicographically smaller than every symbol in $\Sigma$.
We denote by $\Sigma^*$ the set of all finite nonempty strings over $\Sigma$, and we define
$\Sigma\$ = \{\,T\$ \mid T \in \Sigma^*\,\}$.

For a string $T = T[0]T[1]\dots T[n-1]$ over $\Sigma\$$, of length $n$, we write $T[i,j] = T[i]T[i+1]\dots T[j]$
for the substring starting at position $i$ and ending at $j$, where $0 \le i \le j < n$.
A \emph{prefix} of $T$ is a substring of the form $T[0,k]$,
and a \emph{suffix} is a substring of the form $T[k,n-1]$.
We denote the suffix that begins at position $i$ by $T_i$.

The \emph{suffix array} (SA)~\cite{Manber1993} of $T\in\Sigma\$$ is an array
$\SA[0,n-1]$ that contains a permutation of $\{0,\dots,n-1\}$ such that
\[
  T_{\SA[0]} < T_{\SA[1]} < \dots < T_{\SA[n-1]}
\]
in lexicographic order.
The \emph{inverse suffix array} (ISA) is the inverse permutation defined by
$\ISA[\SA[i]] = i$.
Hence, $\ISA[k]$ gives the lexicographic rank of the suffix $T_k$.

The \emph{Burrows--Wheeler Transform} (BWT)~\cite{Burrows1994}
of a string $T$ is a permutation of its characters that tends to cluster equal symbols.
It can be defined using the suffix array as
\[
  \BWT[i] =
  \begin{cases}
    T[\SA[i]-1], & \text{if } \SA[i] > 0,\\
    \$, & \text{otherwise.}
  \end{cases}
\]

A \emph{Lyndon word}~\cite{CFL58} is a nonempty string that is strictly smaller in lexicographic order than all of its proper nonempty suffixes.
Every string can be uniquely factorized as a concatenation of Lyndon words
in nonincreasing lexicographic order (the \emph{Lyndon factorization}).

The \emph{Lyndon array} (LA) of a string $T \in \Sigma\$$ is an array $\LA[0,n-1]$ that stores, for each position $i$,
the length of the longest Lyndon word starting at $T[i]$.
Formally,
\[
  \LA[i] = \min\{\,j-i \mid j>i,\ T[i,j-1] \text{ is not smaller than } T[j,n-1]\,\},
\]
or equivalently, the Lyndon array can be computed from the inverse suffix array using the
\emph{next smaller value} (NSV) function:
\[
  \LA[i] = \text{NSV}_{\ISA}(i) - i,
\]
where
\begin{equation}\label{eq:NSVisa}
  \text{NSV}_{\ISA}(i) = \min\{\,j>i \mid \ISA[j] < \ISA[i]\,\},
\end{equation}
and $\text{NSV}_{\ISA}(i)=n$ if no such $j$ exists.

Figure~\ref{fig:matrix} shows the SA, ISA, BWT and Lyndon array for $T=\texttt{BANANA\$}$.

\section{{In-Place BWT}}\label{sec:inplace-bwt}

The in-place BWT algorithm of Crochemore et al.~\cite{Crochemore2015}
computes the transform in quadratic time using only constant extra space, overwriting the input
string from right to left with its BWT as the construction proceeds.

The BWT is computed inductively on the length of the suffix.
Let $BWT(T_s)$ denote the BWT of the suffix $T_s = T[s,n-1]$.
The base cases are the two rightmost suffixes:
\[
BWT(T_{n-2}) = T_{n-2}, \quad BWT(T_{n-1}) = T_{n-1}.
\]
For the inductive step, assume that $BWT(T_{s+1})$ is already stored in $T[s+1,n-1]$.
The algorithm computes $BWT(T_s)$ by inserting the new suffix $T_s$ in its correct
lexicographic position among the suffixes $T_{s+1},\dots,T_{n-1}$.
It relies on the fact that the position of the end marker~\$ in $BWT(T_{s+1})$
corresponds to the rank of the suffix $T_{s+1}$ among the current suffixes.

The algorithm consists of four steps, for iterations $s = n-2, n-3, \dots, 0$.

\begin{enumerate}
\item \textbf{Find the position of the end marker.}
Locate the position $p$ of~\$ in $T[s+1,n-1]$.

\item \textbf{Compute the rank of the new suffix.}
Let $c = T[s]$ be the first symbol of $T_s$.
The local rank $r$ of $T_s$ among the suffixes
$T_s,\dots,T_{n-1}$ is obtained by summing the number of symbols in $T[s+1,n-1]$ that are strictly smaller than $c$, with the number of occurrences of $c$ in $T[s+1,p]$.

\item \textbf{Insert $T_s$ into the sequence.}
Store $c$ into $T[p]$, replacing the end marker~\$.

\item \textbf{Shift symbols to insert a new end marker.}
Shift the block $T[s+1,r]$ one position to the left and write \$ at position $T[r]$.
Then, $BWT(T_s) = T_s$.
\end{enumerate}

Figure~\ref{fig:bwtinplace} illustrates the execution of the
in-place BWT algorithm on the text \texttt{BANANA\$} for the iteration
$s=0$, showing the effect of Steps~1 and~4 on the partial $BWT(T_1)$  stored in $T[1\,..\,n-1]$.

\begin{figure}[t]
\centering
\begin{subfigure}
  \centering
  \begin{tabular}{rc|c|l}
    & s~ & ~\BWT~ & ~sorted suffixes \\
    \hhline{~---}
    $s \rightarrow$ & 0~ & \cellcolor[gray]{0.9}\textcolor{red}{?} &
          ~\cellcolor[gray]{0.9}BANANA\$ \\
    \cdashline{2-4}
    & 1~ & A    & ~\$ \\
    & 2~ & N    & ~A\$ \\
    & 3~ & N    & ~ANA\$ \\
    ~~~~~ $p \rightarrow$ &
      4~ &
      \blue{\$}  &
      ~ANANA\$ \\
    & 5~ &
      A   &
      ~NA\$ \\
    & 6~ & A    & ~NANA\$ \\
    \cline{2-4}
  \end{tabular}
\end{subfigure}
\hfill
\begin{subfigure}
  \centering
  \begin{tabular}{rc|c|l}
    & s~ & ~\BWT~ & ~sorted suffixes \\
    \hhline{~---}
    & 0~ & A    & ~\$ \\
    & 1~ & N    & ~A\$ \\
    & 2~ & N    & ~ANA\$ \\
    & 3~ & \blue{B}  &  ~ANANA\$ \\
    $r \rightarrow$ & 4~ & \cellcolor[gray]{0.9}\textcolor{red}{\$} &
          ~\cellcolor[gray]{0.9}BANANA\$ \\
    & 5~ &
      A   &
      ~NA\$ \\
    & 6~ & A    & ~NANA\$ \\
    \cline{2-4}
  \end{tabular}
\end{subfigure}

\caption{After Step~1 (left) and Step~4 (right) for $T=\texttt{BANANA\$}$, $s=0$.}
\label{fig:bwtinplace}
 \end{figure}

The algorithm runs in $O(n^2)$ time and uses only $O(1)$ additional variables.
Its simplicity and elegant inductive structure make it a useful foundation for
extensions, such as computing the LCP array~\cite{Louza2017b} or, as we show next, the inverse suffix array and then the Lyndon array, while preserving the constant-space property.

\section{{In-place BWT along with the Lyndon Array}}

We now describe how to extend the in-place BWT algorithm (Section~\ref{sec:inplace-bwt}) to compute the BWT and the Lyndon array (LA) simultaneously, using only constant additional memory.

The key idea is to maintain, during each iteration $s=n-2, n-3, \dots, 0$, the lexicographic ranks among the suffixes $T_s,\dots,T_{n-1}$, stored in the $\ISA[s,n-1]$.
At the end, the values in $\ISA[0,n-1]$ are converted into the Lyndon array, according to Equation \ref{eq:NSVisa}.

The algorithm proceeds inductively on the length of the suffix, as in the original in-place BWT construction.
Let $\ISA(T_s)$ denote the partial ISA for $T_s = T[s,n-1]$.
The base cases for ISA are:
\[
\ISA(T_{n-2}) = [1,0], \quad \ISA(T_{n-1}) = [0].
\]
For the inductive step, assume that we also have $\ISA(T_{s+1})$ already stored in $\ISA[s+1,n-1]$.
The algorithm computes $\ISA(T_{s})$ by storing in $\ISA[s]$ the rank $r$ of the new suffix $T_s$ inserted into $BWT(T_{s+1})$, and by updating the ISA-values of all suffix that has rank larger than $r$ in $\ISA[s+1,n-1]$.
Therefore, our algorithm has a new Step 5, added just after Step 4 of the in-place BWT algorithm as follows.

\begin{enumerate}
\item[\textbf{5.}] \textbf{Update the inverse suffix array.}
After inserting $T_s$ in $BWT(T_{s+1})$, set $\ISA[s]=r$, and increment by one all entries $\ISA[i]$, with $s+1 \leq i <n$, such that $\ISA[i] \geq s$.
\end{enumerate}

\begin{lemma}
Step 5 maintains the correct rank information for all suffixes currently present in $BWT(T_s)$.    
\end{lemma}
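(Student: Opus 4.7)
The plan is to proceed by induction on $s$, assuming that before Step~5 is executed the array $\ISA[s+1, n-1]$ holds a permutation of $\{0, \ldots, n-s-2\}$ that correctly records the lexicographic ranks of the suffixes $T_{s+1}, \ldots, T_{n-1}$ among themselves. I must then show that after Step~5 the array $\ISA[s, n-1]$ holds a permutation of $\{0, \ldots, n-s-1\}$ equal to the true ranks of $T_s, T_{s+1}, \ldots, T_{n-1}$.

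The first ingredient I would invoke is the correctness of the value $r$ computed in Step~2 of the Crochemore et al. procedure recalled in Section~\ref{sec:inplace-bwt}: $r$ counts exactly the number of suffixes in $\{T_{s+1}, \ldots, T_{n-1}\}$ that are lexicographically smaller than $T_s$. Since all suffixes of a string ending with \$ are pairwise distinct, $r$ is precisely the rank of $T_s$ inside the extended set $\{T_s, T_{s+1}, \ldots, T_{n-1}\}$, so the assignment $\ISA[s] \leftarrow r$ is correct by definition.

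For the previously inserted suffixes I would argue by a case split on the old rank. The lexicographic order restricted to $\{T_{s+1}, \ldots, T_{n-1}\}$ is unaffected by the presence of $T_s$, so the only effect of the insertion is to push every suffix strictly larger than $T_s$ up by exactly one rank. If $\ISA[i] < r$, then $T_i$ is one of the $r$ suffixes lexicographically below $T_s$, and so its new rank equals its old rank; no update is needed. If $\ISA[i] \ge r$, then $T_i > T_s$, and the insertion of $T_s$ at rank $r \le \ISA[i]$ shifts the rank of $T_i$ to $\ISA[i]+1$. This matches exactly the update rule prescribed in Step~5, so after execution the array contains the ranks of all suffixes currently present in $BWT(T_s)$.

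The main obstacle I foresee is purely notational: the threshold governing the increment must be the newly computed rank $r$, not the text position $s$, since $\ISA$-entries are ranks in $\{0, \ldots, n-s-2\}$ and comparing them against $s$ would be meaningless. Once one reads Step~5 with the threshold $\ISA[i] \ge r$, the proof collapses to the case split above combined with the correctness of $r$ established for the underlying in-place BWT algorithm, and no further structural argument is required.
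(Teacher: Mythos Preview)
Your proposal is correct and follows essentially the same inductive argument as the paper: induction on $s$ from right to left, invoking the correctness of the rank $r$ produced by Step~2, and a case split on whether the old rank is below or at least $r$. You also correctly spot that the threshold in the textual description of Step~5 should be $r$ rather than $s$; the paper's own proof (and its accompanying code) indeed uses $r$, so this is a typo in the statement of Step~5 rather than a flaw in your reasoning.
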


\begin{proof}
We prove the invariant by induction on $s$ (right to left).  For the
base $s=n-2$ the two ranks are set correctly.  Assume that before
processing $s$ the array $\ISA[s+1..n-1]$ stores the ranks
$0,\dots,m-1$ of $T_{s+1},\dots,T_{n-1}$.  
Let $r\in\{0,\dots,m\}$
be the rank of $T_s$ among $T_s,\dots,T_{n-1}$.  
Every suffix with old rank $q<r$ are shifted one position to the left and keep their rank, while those with $q\ge r$ must have its rank increased by $1$, as $T_s$ was inserted into position $r$; moreover $\ISA[s]$ must be $r$.
Step~5 performs exactly these updates (incrementing entries with
$\ISA[i]\ge r$ and setting $\ISA[s]=r$).  Hence after the update
$\ISA[s..n-1]$ stores the ranks $0,\dots,m$ of $T_s,\dots,T_{n-1}$,
which proves the claim.
\end{proof}

Figure~\ref{fig:bwtisa} illustrates the execution of the
algorithm on the text \texttt{BANANA\$} for the iteration
$s=0$, showing the effect of Steps~1 and~5 on the partial BWT and ISA.

\begin{figure}[t]
\centering
\begin{subfigure}
  \centering
  \begin{tabular}{rc|c|c|l}
    & s~ & ~\BWT~ & ~\ISA~ & ~sorted suffixes \\
    \hhline{~----}
    $s \rightarrow$ & 0~ & \cellcolor[gray]{0.9}\textcolor{red}{?} & \cellcolor[gray]{0.9}\red{?} &
          ~\cellcolor[gray]{0.9}BANANA\$ \\
    \cdashline{2-5}
    & 1~ & A & 3  & ~\$ \\
    & 2~ & N & 5 & ~A\$ \\
    & 3~ & N & 2 & ~ANA\$ \\
    ~~~~~ $p \rightarrow$ &
      4~ & \blue{\$}  & 4 &  ~ANANA\$ \\
    & 5~ & A   & 1 &   ~NA\$ \\
    & 6~ & A   & 0 & ~NANA\$ \\
    \cline{2-5}
  \end{tabular}
\end{subfigure}
\hfill
\begin{subfigure}
  \centering
  \begin{tabular}{rc|c|c|l}
    & s~ & ~\BWT~ & ~\ISA~ & ~sorted suffixes \\
    \hhline{~----}
    & 0~ & A & \red{4}  & ~\$ \\
    & 1~ & N & 3  & ~A\$ \\
    & 2~ & N & \red{6}  & ~ANA\$ \\
    & 3~ & \blue{B}  & 2 &  ~ANANA\$ \\
    $r \rightarrow$ & 4~ & \cellcolor[gray]{0.9}\textcolor{red}{\$} & \cellcolor[gray]{0.9}\textcolor{red}{5} &
          ~\cellcolor[gray]{0.9}BANANA\$ \\
    & 5~ & A  & 1 &   ~NA\$ \\
    & 6~ & A  & 0  & ~NANA\$ \\
    \cline{2-5}
  \end{tabular}
\end{subfigure}

\caption{After Step~1 (left) and Step~5 (right) for $T=\texttt{BANANA\$}$, $s=1$.}
\label{fig:bwtisa}
\end{figure}

{
Finally, given the inverse suffix array $\ISA(T_0)$, it is possible to obtain the Lyndon array, overwriting the value in $\ISA[0,n-1]$ in quadratic time, as follows.
For each position $i=0,1,\dots, n-1$, let
\[
  j = \min\{\,k > i \mid \ISA[k] < \ISA[i]\,\},
\]
or $j = n$ if no such $k$ exists.
Then, $\LA[i] = j - i$.
This corresponds to computing the next smaller value
(NSV) over the ISA, as described in Section~\ref{sec:prelim} (Equation~\ref{eq:NSVisa}).
}
As $\ISA[i]$ is no longer used in the next iteration, we can overwrite it by $\LA[i]$, yielding the final LA without allocating any additional memory.

{
Figure~\ref{fig:isa_la} illustrates how the ISA values maintained
during the in-place BWT construction are subsequently overwritten by the
corresponding Lyndon array values.  
}

In the appendix, we provide a C implementation of the algorithm.

\begin{figure}[t]
    \centering
 \includegraphics[page=1,width=\textwidth]{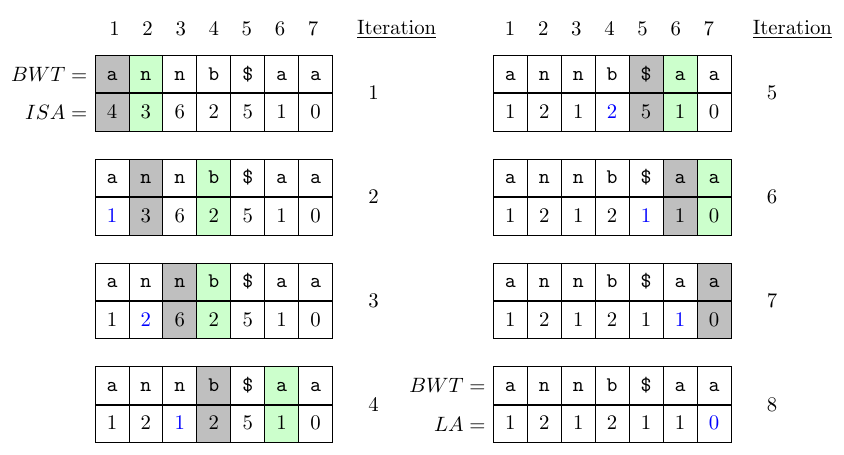}
    \caption{
        Example of the in-place LA construction from ISA 
        on $T=\texttt{BANANA\$}$.
    }
    \label{fig:isa_la}
\end{figure}

\paragraph{\bf Complexity.}
The algorithm follows the same inductive structure as the in-place BWT
construction.  
Each iteration performs a constant amount of work plus two $O(n)$ scans
(Steps~1–2), and there are $n$ iterations, yielding an overall running
time of $O(n^2)$.  
The update of the inverse suffix array in Step~5 requires only an extra
linear scan at each iteration, and the final conversion of $\ISA$ into the Lyndon array by
the next-smaller-value scan also takes $O(n^2)$ time.  
Thus, the total time remains quadratic.
The space usage is $O(1)$ words beyond the input text $T$ and the output
array $\LA$.  As in earlier in-place constructions
\cite{Crochemore2015}, the method is conceptually simple
and demonstrates that both the BWT and the Lyndon array can be computed
together under the strongest possible space constraint.

\section{Conclusion}

We have presented a constant-space algorithm that computes the BWT and the Lyndon array simultaneously.
Our method extends the in-place BWT construction of
Crochemore et al.~\cite{Crochemore2015}
by maintaining the inverse suffix array incrementally and then
deriving the Lyndon array through a simple next-smaller-value
procedure.
The resulting algorithm runs in quadratic time, uses only $O(1)$ additional memory words, and works in the general ordered alphabet model, imposing no bound on the alphabet size.  
As in the original in-place BWT algorithm, the input text is overwritten by its BWT while the Lyndon array is written to a separate array.

Although the quadratic running time limits its practical applicability,
the algorithm shows that the Lyndon array can be computed directly from the text under the
strongest space constraint, even over unbounded alphabets.
Future work includes investigating whether the same strategy can be adapted to compute \emph{succinct} representations of the Lyndon array.


\section*{Funding}
F.A.L acknowledges the financial support from the Brazilian agencies CNPq (grants 408314/2023-0, 311128/2025-4 and 351599/2025-8) and FAPEMIG (grant APQ-01217-22)

\bibliographystyle{plain}
\bibliography{refs}

@string{ spire  = "Proc. International Symposium on String Processing and Information Retrieval (SPIRE)" }

@string{ psc	= "Proc. Prague Stringology Conference" }

@string{ focs   = "Proc. Annual IEEE Symposium on Foundations of Computer Science (FOCS)" }

@string{ icalp = "Proc. International Colloquium on Automata, Languages and Programming" }

@string{ jda = "J. Discret. Algorithms" }

@article{CFL58,
        author = {Chen, K.\ T.\ and Fox, R.\ H.\ and Lyndon, R.\ C.},
        journal = {Ann.\ Math.},
        volume = {68},
        title = {Free differential calculus {IV} --- the quotient groups of the lower central series},
        pages = {81--95},
        year = {1958}
}

@article{Bannai2014,
  author =	 {Hideo Bannai and Tomohiro I and Shunsuke Inenaga and Yuto Nakashima and Masayuki
                  Takeda and Kazuya Tsuruta},
  title =	 {A new characterization of maximal repetitions by {Lyndon trees}},
  journal =	 {CoRR},
  volume =	 {abs/1406.0263},
  year =	 2014
}

@techreport{Burrows1994,
	author = {Burrows, M. and Wheeler, D.J.},
	booktitle = {Systems Research},
	pages = {1--18},
	publisher = {Citeseer},
	title = {A block-sorting lossless data compression algorithm},
	institution = {Digital SRC Research Report},
	year = {1994}
}

@article{Crochemore2015,
	author = {Crochemore, Maxime and Grossi, Roberto and K\"{a}rkk\"{a}inen, Juha and Landau, Gad M.},
	issn = {15708667},
	journal = jda,
	pages = {44--52},
	title = {Computing the {Burrows-Wheeler} transform in place and in small space},
	volume = {32},
	year = {2015}
}

@inproceedings{Ferragina2000,
	author = {Ferragina, Paolo and Manzini, Giovanni},
	booktitle = focs,
	pages = {390--398},
	title = {Opportunistic Data Structures with Applications},
	publisher = {{IEEE} Computer Society},
	year = {2000}
}

@inproceedings{Franek2016,
  author =	 {Frantisek Franek and A. S. M. Sohidull Islam and Mohammad Sohel Rahman and William
                  F. Smyth},
  title =	 {Algorithms to Compute the {Lyndon} Array},
  booktitle =	 psc,
  altpublisher =	 {Department of Theoretical Computer Science, Faculty of Information Technology,
                  Czech Technical University in Prague},
  pages =	 {172--184},
  year =	 2016
}

@article{Louza2017b,
	author = {Louza, Felipe Alves and Gagie, Travis and Telles, Guilherme Pimentel},
	journal = jda, 
	pages = {14--22},
	title = {Burrows-{W}heeler transform and {LCP} array construction in constant space},
	volume = {42},
	year = {2017}
}

@article{Manber1993,
	author = {Manber, Udi and Myers, Eugene W},
	journal = {SIAM J. Comput.},
	number = {5},
	pages = {935--948},
	title = {Suffix Arrays: A New Method for On-Line String Searches},
	volume = {22},
	year = {1993}
}

@inproceedings{Okanohara2009,
	author={Okanohara, Daisuke and Sadakane, Kunihiko},
	booktitle=spire,
	title={A Linear-Time {Burrows-Wheeler} Transform Using Induced Sorting},
	pages={90-101},
	year={2009}
}

@inproceedings{LouzaMMST19,
  author       = {Felipe A. Louza and
                  Sabrina Mantaci and
                  Giovanni Manzini and
                  Marinella Sciortino and
                  Guilherme P. Telles},
  title        = {Inducing the {Lyndon} Array},
  booktitle    = spire,
  altbooktitle = {26th International Symposium, String Processing and Information Retrieval {SPIRE} 2019, Segovia, Spain, October 7-9, 2019, Proceedings},
  altseries       = {Lecture Notes in Computer Science},
  altvolume       = {11811},
  pages        = {138--151},
  altpublisher    = {Springer},
  year         = {2019}
}

@INPROCEEDINGS(Franek2017, 
 author = "Frantisek Franek and Asma Paracha and William F. Smyth",
 title = "The Linear Equivalence of the Suffix Array and the Partially Sorted Lyndon Array",
 booktitle = psc,
 year = 2017,
 pages = "77--84",
)

@article{Louza2018,
author = {Felipe A. Louza and W.F. Smyth and Giovanni Manzini and Guilherme P. Telles},
title = {{Lyndon} array construction during {Burrows–Wheeler} inversion},
journal = {Journal of Discrete Algorithms},
volume = {50},
pages = {2-9},
year = {2018},
}

@inproceedings{BilleE0GKMR20,
  author       = {Philip Bille and
                  Jonas Ellert and
                  Johannes Fischer and
                  Inge Li G{\o}rtz and
                  Florian Kurpicz and
                  J. Ian Munro and
                  Eva Rotenberg},
  title        = {Space Efficient Construction of {Lyndon} Arrays in Linear Time},
  booktitle    = {Proc. International Colloquium on Automata, Languages, and Programming},
  altbooktitle    = {47th International Colloquium on Automata, Languages, and Programming,
                  {ICALP} 2020, July 8-11, 2020, Saarbr{\"{u}}cken, Germany (Virtual
                  Conference)},
  altseries       = {LIPIcs},
  altvolume       = {168},
  pages        = {14:1--14:18},
  altpublisher    = {Schloss Dagstuhl - Leibniz-Zentrum f{\"{u}}r Informatik},
  year         = {2020}
}

@article{Duval83,
  author    = {Jean{-}Pierre Duval},
  title     = {Factorizing Words over an Ordered Alphabet},
  journal   = {J. Algorithms},
  volume    = {4},
  number    = {4},
  pages     = {363--381},
  year      = {1983}
}

@article{BannaiIINTT17,
  author       = {Hideo Bannai and
                  Tomohiro I and
                  Shunsuke Inenaga and
                  Yuto Nakashima and
                  Masayuki Takeda and
                  Kazuya Tsuruta},
  title        = {The "Runs" Theorem},
  journal      = {{SIAM} J. Comput.},
  volume       = {46},
  number       = {5},
  pages        = {1501--1514},
  year         = {2017}
}

@article{BannaiKKP24,
  author       = {Hideo Bannai and
                  Juha K{\"{a}}rkk{\"{a}}inen and
                  Dominik K{\"{o}}ppl and
                  Marcin Piatkowski},
  title        = {Constructing and indexing the bijective and extended Burrows-Wheeler
                  transform},
  journal      = {Inf. Comput.},
  volume       = {297},
  pages        = {105153},
  year         = {2024}
}

@inproceedings{BannaiKL25,
  author       = {Hideo Bannai and
                  Dominik K{\"{o}}ppl and
                  Zsuzsanna Lipt{\'{a}}k},
  editor       = {Paolo Ferragina and
                  Travis Gagie and
                  Gonzalo Navarro},
  title        = {A Survey of the Bijective Burrows-Wheeler Transform},
  booktitle    = {The Expanding World of Compressed Data: {A} Festschrift for Giovanni
                  Manzini's 60th Birthday, Manzini's Festschrift, Venice, Italy, July
                  25, 2025},
  series       = {OASIcs},
  volume       = {131},
  pages        = {2:1--2:26},
  publisher    = {Schloss Dagstuhl - Leibniz-Zentrum f{\"{u}}r Informatik},
  year         = {2025}
}

@article{NUNES2025413,
title = {Space-Efficient {Lyndon} Array Construction from Compressed Texts},
journal = {Procedia Computer Science},
volume = {273},
pages = {413-420},
year = {2025},
note = {XIII Latin American Algorithms, Graphs, and Optimization Symposium (LAGOS 2025)},
issn = {1877-0509},
author = {Daniel S.N. Nunes and Felipe A. Louza and Guilherme P. Telles},
}


\newpage
\appendix
\section{Source Code}

\begin{lstlisting}[mathescape=true]
void bwt_lyndon_inplace(unsigned char *T, int n, int* LA){
  int i, p, r=1, s;
  unsigned char c;
  
  // base cases
  LA[n-2] = n-1;
  LA[n-1] = n-2; 
  
  for (s=n-3; s>=0; s--) {
     c = T[s];
  
     /*steps 1 and 2*/
     r = s;
     for (i=s+1, T[i]!=END_MARKER; i++) 
       if(T[i]<=c) r++;
     p = i;
     while (i<n)
       if (T[i++]<c]) r++;
  
     /*step 3*/
     T[p]=c;
  
     /*step 4*/
     for (i=s; i<r; i++) {
       T[i]=T[i+1];
     }
  
     T[r]=END_MARKER;
  
     /*step 5*/
     for (i=s+1; i<n; i++) {
       if(LA[i]<=r) LA[i]--;
     }
     LA[s]= r;
  }
  /*ISA  to LA*/
  for (i=0; i<n; i++) {
    for(j=i+1; j<n; j++){
      if(LA[j]<LA[i])
        break; 
    } 
    LA[i]=j-i;
  }
}

\end{lstlisting}

\end{document}